\documentclass[letterpaper,11pt]{article}

\usepackage{booktabs} 

\usepackage[utf8]{inputenc}
\usepackage{amssymb}
\usepackage[margin=1in]{geometry}
\usepackage{amsthm}
\usepackage{thmtools}
\usepackage{thm-restate}

\usepackage{amsmath}
\usepackage{mathtools}
\usepackage{bbm}
 \usepackage{amsfonts}

\usepackage{dsfont}
\usepackage{graphicx}
\usepackage{multirow}
\usepackage{soul}
\usepackage{color}
\usepackage{xcolor}
\usepackage[
backend=biber,
style=alphabetic,
sorting=ynt,
maxbibnames=15
]{biblatex}
\bibliography{refs}

\usepackage{hyperref}
\usepackage{graphicx}
\usepackage{subfigure}
\hypersetup{%
   breaklinks,%
   colorlinks=true,%
   linkcolor=black,%
   urlcolor=black,%
   citecolor=[rgb]{0,0,0.45}
}

\theoremstyle{plain}
\newtheorem{theorem}{Theorem}[section]
\newtheorem{proposition}[theorem]{Proposition}
\newtheorem{lemma}[theorem]{Lemma}

\newtheorem{definition}[theorem]{Definition}

\newtheorem{remark}[theorem]{Remark}
\theoremstyle{definition}

\usepackage{color-edits}[showdeletions]

\usepackage{authblk}

\title{
A Theoretical Model for Grit in Pursuing Ambitious Ends}
\author[1]{Avrim Blum}
\author[2]{Emily Diana}
\author[1]{Kavya Ravichandran}
\author[3]{Alexander Tolbert}

\affil[1]{Toyota Technological Institute at Chicago, USA}
\affil[2]{Carnegie Mellon University, Tepper School of Business, USA}
\affil[3]{Emory University, Department of Quantitative Theory and Methods, USA}

\date{}

\begin{document}

\maketitle

\begin{abstract}
Ambition and risk-taking have been heralded as important ways for marginalized communities to get out of cycles of poverty. 
As a result, educational messaging often encourages individuals 
to strengthen their personal resolve and develop characteristics such as discipline and grit to succeed in ambitious ends. 
However, recent work in philosophy and sociology highlights that this messaging often does more harm than good for students in these situations. We study similar questions using a different epistemic approach and in simple theoretical models -- we provide a quantitative model of decision-making between stable and risky choices in the {\em improving multi-armed bandits} framework. We use this model to first study how individuals' ``strategies'' are affected by their level of grittiness and how this affects their accrued rewards. Then, we study the impact of various interventions, such as increasing grit or providing a financial safety net. Our investigation of rational decision making involves two different formal models of rationality, the competitive ratio between the accrued reward and the optimal reward and Bayesian quantification of uncertainty.
\end{abstract}

\section{Introduction}

Scholars across various fields have long been interested in understanding how humans make decisions between immediate and long-term rewards in the face of uncertainty. Costs associated with a given action can also greatly influence how agents make these decisions, and so it is of interest to understand how to encourage exploratory and ambitious behavior, particularly in groups who have historically faced lack of access to such opportunities. One important factor that has received much attention in recent years is {\em grit}, with researchers across various fields studying the role resilience and optimism play not only in an individual's success but also in lifting disadvantaged communities out of bad circumstances. \par

While it is difficult to give a single, unifying definition of grit, philosophers Jennifer Morton and Sarah Paul provide a thesis that serves as our guiding qualitative description. They highlight that grit is rational (distinguishing it from delusional optimism) and that it is an outcome of {\em beliefs} an agent holds about their circumstances. In more detail:

\begin{quote}
    ``Grit is not simply the ability to withstand the pain of effort and setbacks, or to resist the siren song of easier rewards; it is a trait or capacity that consists partly in a kind of epistemic resilience. \par

    This is a descriptive rather than a normative claim, and it has not gone unnoticed by psychologists who study perseverance. Angela Duckworth emphasizes the relevance of hope in underwriting the capacity for grit, where hope is defined as the expectation that one’s efforts will pay off. And Martin Seligman touts the importance of optimism, which involves a distinctive style of explaining to oneself why good and bad events happen.'' \cite{morton2019grit}
\end{quote}

Currently, prevailing guidance in educational settings focuses heavily on personal characteristics such as grit, discipline, and resilience as the way to succeed in ambitious ends \cite{Gibbon_seligman_2020}, often even at the protest of the scholars whose work was used to justify this perspective \cite{kamenetz_key_2015}. Recent work studies what happens when students from disadvantaged backgrounds follow those lessons in attempting to succeed at ambitious long-term ends such as college; these works find that simply pushing grit can often have negative effects \cite{morton2019grit, wooten_precarious_2022}. What, then, distinguishes productive grit from delusional optimism? \par

In this work, we try to isolate where additional grit helps and where it hurts via a simple quantitative model. We also investigate how outcomes are influenced by having financial supplements. Our goal is to study decision-making dynamics in a controlled, quantitatively-defined setting. Our work (1) provides further understanding of the relationship between grit and financial support in succeeding when pursuing ambitious outcomes and, while doing so, (2) introduces a simple two-armed bandit theoretical model that shows promise as a formalization of a decision-making problem that juxtaposes stable reward against ambition.

\subsection{Our Approach To Studying Grit}

Morton and Paul emphasize that grit leads agents to {\em rationally} stick with an option that others are not willing to stick with. We consider two formal models of rationality in decision-making. First, we study agents who maximize their competitive ratio. Measuring the ratio between achieved outcome and best possible outcome in hindsight, the competitive ratio being maximized implies that the agent has minimal multiplicative ``regret'' about the past. Thus, this model of rationality is a backward-looking model. Secondly, we study agents who take a Bayesian perspective to uncertainty quantification. In this view, an agent has an explicit prior probability distribution over the possible outcomes and updates their posterior each time they receive new information. In constructing a prior, the agent explicitly quantifies what they think the future holds, making this a forward-looking model of rationality. In this paper, we will compare and contrast the aspects of grit we are able to study in each of these formal models of rationality and see how the conclusions we draw from each on similar formal models relate.

In order to understand the impact of grit, we study two things -- first, we consider the effect of grit on the ``policy'' or ``strategy'' that the agent follows, i.e., the actions they take; second, we study what the impact of that ``policy'' is on the reward they achieve. This modular breakdown will become especially useful when we consider the impact of a trust fund on the actions of an agent, allowing us to disentangle the effect of grit as a trait and other interventions that lead agents to behave similarly to those who are gritty.

\subsection{Our Contributions}

In this work, we propose studying grit in the {\em improving multi-armed bandits} framework. We develop an instance that allows us to satisfactorily investigate the impact of grit as a characteristic the agent has, and we show that the instance we propose is the simplest instance in which the strategy is non-trivial. In order to understand strategies resulting from grit, which as discussed above is rational, in this model, we must formalize what we mean by ``rational,'' and we do so by appealing to two notions of rationality well-studied in computer science. In particular, we first consider the competitive ratio, a standard notion in the analysis of online algorithms, which is optimized when a strategy minimizes multiplicative regret in hindsight; we also study a Bayesian notion of rationality in which an agent has a prior that helps quantify their uncertainty about the future, which they update to a posterior based on evidence from their environment. Between these two notions, we study how gritty behavior is reflected in both forward- and backward- looking formalizations of rationality. \par

With the model and notions of rationality in hand, we study how grit affects an agent's strategy and how that in turn affects their outcomes. When grit is related to the optimistic outlook of an agent, we can quantitatively derive cases in which grit helps and hurts the agent, showing that an excess of grit harms the agent by causing them to net less reward than their less-gritty counterparts. When grit is reflected in how willing an agent is to tolerate discomfort, we conclude that though agents who require comfort can minimize their multiplicative regret in hindsight pretty easily, agents who can tolerate more discomfort can explore for longer. This section culminates with studying how financial support changes agents' behavior, essentially showing that financial support allows agents to expand their exploration horizon while allowing them to still receive comparable reward to their less gritty counterparts. 
Finally, we investigate similar questions in the Bayesian setting, and we find that grit when framed as uncertainty tolerance has a similar effect on an agent's policy, namely that more grit causes an agent to strive for longer.

The rest of the paper is structured as follows. First, we survey related work from several fields, including philosophy, sociology, and computer science. Then, we formally introduce the multi-armed bandits instance we study. In Section~\ref{sec:cr}, we study the competitive ratio-based notion of rationality, following which in Section~\ref{sec:trust-fund} we introduce models for financial support and study the effects of the financial safety net on behavior and reward. Finally, in Section~\ref{sec:bayesian}, we investigate the Bayesian notion of rationality, studying uncertainty tolerance in this setting.

\subsection{Related Work}

In this section, we survey related work at a high level. See Appendix~\ref{appendix:related-work} for more details.

First, we discuss work from the social sciences that observes grit in people and analyzes its role and impact in society.
One of the most popular studies of grit is psychologist Angela Duckworth's book \cite{duckworth2016grit}.
Relatedly, psychologist Martin Seligman has a large body of work encouraging positive attitudes as a path to success and good outcomes \cite{seligman_authentic_2002, seligman_learned_2006}.
More recently, sociologist Tom Wooten studied the impact of the ``no excuses'' educational approach, which draws on the above ideas, in his dissertation \cite{wooten_precarious_2022}, particularly exploring the mechanisms by which these educational systems perpetuate poverty. Abstracting findings from several such observational studies, Morton and Paul develop a philosophical theory of grit \cite{morton2019grit}. We build heavily on the abstractions derived in this work. \par

Next, we survey computer science literature that we draw on in order to quantitatively study the question of decision-making with grit. Our formalizations of rationality are inspired by objectives that are commonly optimized in the computer science literature, including the competitive ratio studied in online learning \cite{borodin_online_2005} and the Bayesian approach to uncertainty quantification  \cite{bernardosmith_bayesian_2000}. The framework we use to represent the decision problem is an instance of the multi-armed bandits problem, a well-studied framework for making decisions when the payoff is unknown \cite{slivkins_introduction_2022}. In particular, we suppose the bandit arms have structured reward, and the structure of interest is improving, first formalized by \cite{heidari} and studied by \cite{patil_mitigating_2023, blum_nearly-tight_2024}.

\subsubsection{Key Features of Grit According to Morton and Paul}

Since we base our development of a theory of grit on the account of \cite{morton2019grit}, it will be helpful to summarize the key points from their work. In their work, Morton and Paul describe gritty behavior as displaying a form of ``epistemic resilience.'' An important part of this is how an agent redefines their goal in the presence of encouraging or discouraging evidence. They also reason that grit is {\em rational}, and therefore agents displaying grit cannot {\em ignore} evidence but rather should be sensitive to failure. Citing works from psychologists Angela Duckworth and Martin Seligman, Morton and Paul further highlight the importance of hope and optimism. They culminate by providing a description of an ``Evidential Threshold'' that captures the decision-making of a gritty agent. The Evidential Threshold as conceptualized by them asks how compelling evidence must be to change the actions of an agent; for a gritty agent, the Evidential Threshold is higher than that of an ``impartial observer.'' Since this threshold hinges on how compelling the agent finds the evidence, Morton and Paul argue that Permissivism applies, and different agents can witness the same evidence but come to different conclusions about what implications that evidence should have on their actions. 
Throughout our paper, we will connect back to these various facets described by Morton and Paul, including by providing a quantitative analog of the Evidence Threshold.

\section{Formal Setting: Improving Multi-Armed Bandits}

We propose studying grit in the improving multi-armed bandits framework \cite{heidari}. 
In this setting, there are $k$ actions (modeled as ``arms''), and at each time step, the agent chooses one of the arms to ``pull,'' i.e., play. 
Each arm has associated with it a 
reward function that increases as a function of the amount of time for which it has been played. This allows us to model situations where engaging with an option increases its payout, for instance when learning a skill or developing a new technology. For most of the paper, we consider a continuous-time version of this framework. Formally:

\begin{definition}
    An instance of the improving multi-armed bandits problem comprises $k$ bandit arms, each of which is associated with a reward function that increases as a function of the (possibly fractional) amount of time for which it has been played. 
\end{definition}

In our models for grit, it is natural to think of options having different payoffs, some of which are static over time and some of which take a while to start paying off but then pay off well once they do. Thus, we propose a two-armed bandit instance in which to study gritty behavior. All of our models will include a stable arm, $f_1(t) = 1 \, \forall \, t$ that represents an option that starts paying off immediately and consistently rewards the agent the same amount (i.e., little scope for growth). The other arm is one in which there is no reward at first (or in certain cases where specified, there is actually a {\em cost} to striving), after which the arm starts providing non-negative reward.
We refer to this arm as the ``striving'' arm, and it provides a reward of 0 units for the first $\theta$ time steps that it is played ($\theta$ being unknown to the agent), following which it increases linearly at a slope of $\alpha\,$ (agents will have beliefs about $\alpha$).
Formally, the two bandit arms are:

\begin{equation} \label{eqn:main-instance}
f_1(t) = 1 \, \forall \, t \qquad f_2(t) = \begin{cases}
    0 & t < \theta \\
    \alpha(t-\theta) & t \ge \theta 
\end{cases}\,,
\end{equation}
where $t$ represents the amount of time for which that arm has been played.

In most of the paper, we use this model, though in some sections we set $\alpha = 1.$ Later on, in Section~\ref{subsec:discomfort}, we also define a notion of ``comfort'' which places a restriction on how often $f_2$ can be played, requiring that $f_1$ be played frequently enough to build up a buffer.

This choice of model is natural: the improving multi-armed bandits problem is well-studied, and there is a clear understanding of what we could hope to achieve in the general case \cite{patil_mitigating_2023, blum_nearly-tight_2024}. The structure in the reward function allows us to capture the fact that {\em investing} time into an option may change its payoff. Finally, the instance described is abstract and flexible, allowing us to model a wide range of real-world settings. On the other hand, a limitation is that this instance only allows for studying an agent's decision between two options. The stable option is arguably over-simplified, since stable options can also lead to growth in the real world. Overall, however, we believe this is a good starting point for formally modelling the decision problem of interest. Further, within the improving two-armed bandit setting, this is the simplest model in which we see non-trivial behavior: if the second arm's payoff were flat instead of linear, the trivial strategy of playing $f_1$ all along would suffice for optimizing the competitive ratio. This is discussed in detail in Appendix~\ref{subsubsec:whynotflat}.

\section{Rationality in Terms of Competitive Ratio} \label{sec:cr}

\subsection{Rationality in This Model}

The first formal model for rationality we study is one in which an agent minimizes regret in hindsight by optimizing the competitive ratio. Competitive ratio measures the relationship between the achieved reward and the optimal reward. This is a commonly-studied notion in theoretical computer science, and particularly in online algorithms \cite{borodin_online_2005}. Achieving a competitive ratio of 1 ensures that the agent did as well as they could hope. Formally, we define the competitive ratio as follows:

\begin{definition}
    An algorithm achieves {\em competitive ratio} $g$ if the ratio of its reward $ALG$ to the optimal achievable reward $OPT$ is at least $g\,,$ i.e., if $ALG/OPT \ge g\,.$
\end{definition}

In certain situations where we are interested in the accumulated reward up to a certain time point $t$ of the algorithm, we will refer to it as $ALG_t\,.$

We will consider agents who have beliefs about $\alpha\,,$ the potential for payoff, and who optimize their competitive ratio over an unknown $\theta\,,$ the time after which the payoff begins. Suppose an agent has a deterministic algorithm to choose how to play arms $f_1$ and $f_2$ as defined above. Each time the agent plays $f_1$ and then switches back to arm $f_2\,,$ they could instead have played $f_2$ followed by $f_1$ while gaining the same reward but possibly witnessing the increase sooner. Thus, there is no benefit to interweaving steps of the arms, and any deterministic strategy for playing this instance can be boiled down to the point at which it switches from $f_2$ to $f_1\,.$ We formalize this in the following lemma.

\begin{lemma} \label{lemma:switchpt}
    Any strategy that interweaves plays of $f_1$ and $f_2$ can be converted into a strategy that plays only $f_2$ followed by only $f_1$ that achieves at least as much reward.
\end{lemma}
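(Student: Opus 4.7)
The plan is to observe that, in this model, the total reward accrued from each arm depends only on the cumulative time spent on that arm and not on when within the horizon that time is spent. Given this path-independence, any interweaved play pattern can be reorganized into the canonical ``all $f_2$ then all $f_1$'' form without changing the total reward.

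Concretely, I would fix any deterministic strategy together with its execution on a given instance (i.e., a particular $\theta$), and let $T_1$ and $T_2$ denote the total amounts of time the strategy plays $f_1$ and $f_2$ respectively. The instantaneous reward rate at wall-clock time $s$ while playing arm $i$ is $f_i(t_i(s))$, where $t_i(s) := \int_0^s \mathbf{1}[\text{arm } i \text{ played at } \tau]\, d\tau$ is the cumulative time arm $i$ has been played up to $s$. Since $t_i$ advances at unit rate exactly when arm $i$ is active, a change of variables $u = t_i(s)$ yields $R_i = \int_0^{T_i} f_i(u)\, du$. Hence the reward from each arm is a function of $T_i$ alone.

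I would then exhibit the converted strategy as: play $f_2$ continuously for its first $T_2$ time units, then $f_1$ continuously for the next $T_1$ time units. This strategy matches the original strategy's cumulative times on each arm, and hence, by the identity above, matches its reward arm-by-arm. The total reward is therefore identical, so in particular it is at least as large as that of the original interweaved strategy.

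I do not foresee a real analytical obstacle, since the lemma is essentially a structural payoff-decomposition observation. The only subtlety worth flagging is that each arm's reward function $f_i$ is a function of that arm's own cumulative play time, which is what makes the arm-by-arm integral above well defined even when the set of times during which arm $i$ is played is not a single interval; this is handled cleanly by the indicator-based definition of $t_i(s)$ used in the change of variables. One could additionally remark that playing $f_2$ first is heuristically preferable because it reveals $\theta$ sooner and thus benefits any subsequent adaptive decision, but this strengthening is not needed for the stated weak inequality.
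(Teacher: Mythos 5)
Your proof is correct, and it takes a genuinely different and cleaner route than the paper's. The paper argues by cases on whether the total striving time $s$ reaches $\theta$: if $s < \theta$, both the interweaved and reordered strategies earn exactly the $f_1$-time $t_1$; if $s \ge \theta$, the paper converts to a strategy that, upon witnessing the increase at cumulative striving time $\theta$, reallocates all remaining time to $f_2$, earning $\frac{1}{2}(s-\theta+t_1)^2$, and compares this quantity to $t_1$. You instead observe that each arm's reward depends only on that arm's own cumulative play time (your change of variables $u = t_i(s)$ makes this precise even when the set of times on arm $i$ is not an interval), so the canonical ``$f_2$ then $f_1$'' reordering preserves the reward of each arm exactly; no casework on $\theta$ and no reallocation are needed. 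Your version is not only more elementary but also tighter on one point: in the paper's second case the interweaved strategy actually earns $t_1 + \frac{\alpha}{2}(s-\theta)^2$ rather than $t_1$, so comparing the converted reward only to $t_1$ does not by itself establish dominance (and the asserted inequality $\frac{1}{2}(s-\theta+t_1)^2 > t_1$ can fail, e.g.\ for $t_1 = 1$ and $s = \theta$); your exact-equality rearrangement is precisely what closes that gap. What the paper's route buys in exchange is the (optional) strengthening that a converted strategy can do strictly better by remaining on $f_2$ once the increase is witnessed --- the same heuristic point you flag at the end as unnecessary for the stated weak inequality.
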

\begin{proof}
    Suppose the interweaving strategy plays a total of $t_1$ steps on $f_1$ and $s$ steps on $f_2\,.$ Consider two cases: in the first case, $s < \theta\,.$ Then, the total reward of the policy is $t_1\,,$ and playing $s$ steps of 0-reward-accruing $f_2$ followed by $t_1$ steps of $f_1$ achieves this reward, the same as any interweaving version. In the second case, $s \ge \theta\,.$ Now, by the previous argument, the policy that plays $\theta$ steps of $f_2$ followed by $t_1$ steps of $f_1$ still receives $t_1$ reward. However, after $\theta$ steps on $f_2\,,$ the agent witnesses the increase and therefore has no incentive to switch to the stable arm. The reward of playing the remaining $s-\theta + t_1$ steps on the striving arm is $\frac 12 (s-\theta + t_1)^2\,,$ which is greater than $t_1.$ Thus, we have shown that for each interweaved strategy, there is a non-interweaved one that accrues at least as much reward.
\end{proof}

As a result of this lemma, it suffices to study strategies that play $f_2$ for a while and then permanently switch to $f_1.$ Thus, we will extensively study this switch point, and it will be an interesting quantity to study as a proxy for gritty and non-gritty strategies.  

Morton and Paul discuss an ``Evidential Threshold,'' writing ``In a given context, how much evidence is required -- that is, how compelling must the evidence be --before the thinker comes to a conclusion about what to believe or revises her current beliefs?'' \cite{morton2019grit}. In our proposed framework of analysis, the switch point reflects this evidential threshold. The agent's strategy can be summarized as ``if I don't see evidence that striving is going to pay off until time $s$, I will give up.'' The threshold $s$ is different for different agents, and so the policy the agent follows exactly corresponds to their evidential threshold as described by Morton and Paul.

For a fixed switch point $s\,,$ there are two ``worst'' cases -- the first is when the arms are such that playing the stable arm the whole time would provide the optimal reward, and so the longer the agent spends exploring before switching, the worse the competitive ratio gets. On the other hand, if the striving arm pays off right after the agent switches, then staying on the arm just a little longer would have paid off, so this competitive ratio is increasing with $s\,.$ In order for our strategy to minimize overall regret, we pick $s$ such that the ratio is the same regardless of which extreme case we are in, i.e., we solve for $s$ when the two extreme cases are equal.

\newcommand{\alphat}{\tilde{\alpha}}

\subsection{Modelling Grit: Optimism} \label{subsec:optimism}

Now, let us discuss the relationship between the grittiness of an agent and their approach to the multi-armed bandit problem above. A gritty agent is optimistic about the potential for reward of the risky action, or they would not persevere in taking it.
Accordingly, for this setting, we consider an agent with higher guess for the slope of the increasing portion of arm $f_2$ to be more gritty. Based on this, we can investigate consequences (in terms of both strategy and reward) of demonstrating grit and offer mechanistic insight as to why these consequences exist. The guess for the slope
affects how long the agent is willing to play the striving arm (details in Appendix~\ref{appendix:proof-alphat-switch}). Formally:

\begin{definition}
    In the ``grit-as-optimism'' setting, an agent is $\alphat$-gritty if they guess that the slope of the increasing portion of the striving arm is $\alphat\,.$
\end{definition}


\begin{lemma} \label{lemma:alphat-switch}
    Suppose an agent guesses a value for $\alpha$ that we call $\alphat\,,$ i.e., is $\alphat-$gritty. Assume their goal is to maximize the competitive ratio. Then, they play $f_2$ for $T- \sqrt{\frac{2T}{\alphat}}$ steps, following which they switch to $f_1$ permanently.
\end{lemma}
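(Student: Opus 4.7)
The plan is to reduce the decision problem to choosing a single switch point $s$ and then solve a minimax over the unknown onset time $\theta$, using the agent's belief that the slope equals $\alphat$.

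First, I would invoke Lemma~\ref{lemma:switchpt} to assume without loss of generality that the strategy plays $f_2$ for $s$ steps and then switches permanently to $f_1$. If during those $s$ steps the payoff happens to turn on at some $\theta \le s$, then $f_2$ becomes a known linearly increasing arm and the agent has no reason ever to abandon it, so every candidate strategy is fully parameterized by the scalar $s$.

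Next, fixing $s$, I would partition the analysis over $\theta$ into regimes. If $\theta \le s$, the agent plays $f_2$ throughout and obtains $\tfrac{\alphat}{2}(T-\theta)^2$, which is optimal under the belief, so the ratio is $1$. If $\theta > s$, the agent earns only $T-s$ from $f_1$, while the believed optimum either stays on $f_2$ (when $\tfrac{\alphat}{2}(T-\theta)^2 > T$, i.e.\ when $T-\theta > \sqrt{2T/\alphat}$) or plays $f_1$ throughout for reward $T$. This yields two candidate worst-case ratios: $(T-s)/T$ when $\theta$ is large, and $2/[\alphat(T-s)]$ as $\theta \to s^+$, which is the worst point in the first sub-regime since $(T-s)/[\tfrac{\alphat}{2}(T-\theta)^2]$ is monotone in $\theta$ there.

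Finally, I would observe that $(T-s)/T$ is decreasing in $s$ while $2/[\alphat(T-s)]$ is increasing in $s$, so the minimax is attained at the unique $s$ where the two binding expressions agree. Setting them equal gives $\alphat(T-s)^2 = 2T$, so $s = T - \sqrt{2T/\alphat}$, matching the claim. The main thing to be careful about is the case split that isolates the two binding worst cases and the verification that no interior value of $\theta$ in the first sub-regime is worse; once those are handled, the rest is routine algebra.
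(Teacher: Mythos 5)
Your proposal is correct and follows essentially the same route as the paper's proof in Appendix~\ref{appendix:proof-alphat-switch}: reduce to a single switch point via Lemma~\ref{lemma:switchpt}, identify the two binding worst cases $(T-s)/T$ and $2/[\alphat(T-s)]$, and equalize them to obtain $s = T - \sqrt{2T/\alphat}$. Your version is somewhat more careful than the paper's (which simply writes down the two extreme ratios and equalizes), since you explicitly verify via monotonicity in $\theta$ that no interior onset time is worse than the two extremes.
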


\subsubsection{Results}

Now, let us consider A, an $\alphat_A$-gritty agent. A is not particularly gritty, so $\alphat_A$ is small. On the other hand, B is somehow privy to perfect information, so $\alphat_B = \alpha\,.$ Finally, consider C, an $\alphat_C$-gritty agent. C is very gritty, and so $\alphat_A < \alphat_B = \alpha < \alphat_C\,.$ (We are simply instantiating the agents in this way to study ``high'' and ``low'' grit as they compare to perfect information.) 
In order to understand the impact the grit-induced strategy has on the reward the agent accrues, we are interested in understanding (1) what level of grit witnesses the striving arm paying off; (2) what level of grit results in good stable reward.

\paragraph*{Observation 1: Duration of Attempt} 
Applying Lemma~\ref{lemma:alphat-switch}, we have that agent A switches at time $s_A = T- \sqrt{\frac{2T}{\alphat_A}}\,,$ agent B at $s_B = T- \sqrt{\frac{2T}{\alpha}}\,,$ and agent C at $s_C = T- \sqrt{\frac{2T}{\alphat_C}}\,.$ Note that $s_C > s_B > s_A\,.$ Our first conclusion, therefore, is that the duration for which an agent explores is longer for a grittier person.

\paragraph*{Observation 2: When Does Increased Grit Benefit the Agent?} Let us now study under what conditions each agent comes out on top. The reward achieved by any agent depends on the relationship between $\theta\,,$ the threshold beyond which the striving arm starts increasing, and $s\,,$ the agent's switch point as stated in the below proposition.

\newcommand{\optreward}{\frac{\alpha}{2} (T-\theta)^2}
\newcommand{\smreward}[1]{\sqrt{\frac{2T}{\alphat_{#1}}}}

\begin{proposition}
    If $\theta \le s\,,$ then an agent switching at $s$ receives $\optreward$ reward, but if $\theta > s\,,$ then the agent receives $T-s$ reward.
\end{proposition}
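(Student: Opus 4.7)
The plan is a straightforward case split on the sign of $\theta - s$. The key preliminary observation, already established by Lemma~\ref{lemma:switchpt}, is that an agent whose nominal policy is ``switch from $f_2$ to $f_1$ at time $s$'' will in fact abandon that plan and remain on $f_2$ as soon as they observe positive reward from $f_2$, since from that point onward $f_2$ strictly dominates $f_1$. So the switch-at-$s$ policy is implicitly contingent on not yet having witnessed the payoff, which occurs at time $\theta$. Once this contingency is acknowledged, the reward accounting reduces to integrating the reward rate on whichever arm the agent is playing over the appropriate time interval.

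In the case $\theta \le s$, the agent is still on $f_2$ when the payoff begins at time $\theta$, observes the onset of positive reward, and by the reasoning above remains on $f_2$ for the rest of the horizon. Thus the agent plays $f_2$ for all $T$ time steps: they accrue $0$ on $[0,\theta]$ (since $f_2(t)=0$ for $t<\theta$) and then $\alpha(t-\theta)$ on $[\theta,T]$. Computing $\int_\theta^T \alpha(t-\theta)\,dt$ gives $\tfrac{\alpha}{2}(T-\theta)^2$, exactly as claimed. This integration convention matches the one used in the proof of Lemma~\ref{lemma:switchpt}, where the reward of playing $f_2$ for $x$ steps past $\theta$ was taken to be $\tfrac12 x^2$ (with $\alpha=1$ there).

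In the case $\theta > s$, the agent reaches their planned switch time $s$ without ever seeing positive reward from $f_2$ (since $f_2(t)=0$ for $t<\theta$ and $s<\theta$), so the switch rule fires and the agent plays $f_1$ from time $s$ onward. The first $s$ steps on $f_2$ contribute $0$, and the remaining $T-s$ steps on $f_1$ each contribute $1$, yielding a total of $T-s$.

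There is no real obstacle in this proof; the only conceptual point worth calling out is the appeal to Lemma~\ref{lemma:switchpt} to justify that an agent ``switching at $s$'' does not actually switch when $\theta\le s$. With that in place, both reward expressions follow immediately from the piecewise definition of $f_2$ in \eqref{eqn:main-instance} and the constant reward rate of $f_1$.
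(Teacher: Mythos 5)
Your proof is correct and follows essentially the same case split and reward accounting as the paper's own proof; the paper likewise argues that for $\theta \le s$ the agent witnesses the onset and accrues $\frac{\alpha}{2}(T-\theta)^2$ by integrating the linear ramp from $\theta$ to $T$, and for $\theta > s$ collects $T-s$ from the stable arm. Your explicit appeal to Lemma~\ref{lemma:switchpt} to justify that the agent abandons the planned switch upon seeing the payoff is a reasonable way of making precise something the paper leaves implicit, but it is not a different argument.
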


\begin{proof}
    If $\theta \le s\,,$ then the arm starts paying off while the agent is still playing it. This means that the agent accrues reward starting at time $\theta$ up until time $T$ as the function increases linearly. Hence, the reward is $\frac \alpha2 (T-\theta)^2\,.$ On the other hand, if $\theta > s\,,$ then the agent gains no reward from the striving arm. They gain reward from the stable arm from time $t=s$ to time $t=T\,,$ which is $T-s$ units of reward.
\end{proof}

Let us consider how different levels of grit affect reward\footnote{Figure~\ref{fig:reward-visual} in Appendix~\ref{appendix:reward-theta-vis} may help visualize the relative rewards.}:

\begin{enumerate}
    \item \textbf{Case 1:} $\pmb{\theta < s_A\,.}$ Everyone's reward is the same in this case, since all agents receive $\frac \alpha 2 (T-\theta)^2$ reward.
    \item \textbf{Case 2:} $\pmb{s_A < \theta < s_B\,.}$ In this case, A has given up and switched to the stable arm. As a result, they receive $\smreward{A}$ reward. However, agents B and C stay on the striving arm long enough to witness $\theta\,,$ and so they receive $\optreward$ reward. We see that this is a situation where a lack of grit fares worse than being rather gritty.
    \item \textbf{Case 3:} $\pmb{s_B < \theta < s_C\,.}$ In this case, A and B have both given up and switched to the stable arm, but C valiantly perseveres. Here, A receives $\smreward{A}$ reward, B receives $\smreward{B}$ reward, and C receives $\optreward\,.$ C outshines even B, who had perfect information about the rate of reward increase. In this case, curiously, the least gritty person actually fares better than someone with perfect knowledge of the payoff. We can understand this as follows: there are two reasons why the received reward would be small -- one is location of $\theta$ and one is size of $\alpha$. In this case, someone that is pessimistic about the value of the reward magnitude due to pessimism about $\alpha$ ends up reaping the side benefit when the reward is indeed small, but it is because $\theta$ is large i.e., they are right about the reward being small but for the wrong reasons\footnote{This is an interesting consideration for further modeling -- in this particular setting, the model does not disentangle between these reasons for the reward to be small. More broadly, a competitive ratio-based model that studies reward in general rather than particular kinds of reward may not be able to disentangle this at all.}. 
    \item \textbf{Case 4:} $\pmb{s_C < \theta\,.}$ In this case, no one receives the reward from the striving arm. However, since C has stuck around for so long, they actually also receive less reward overall from the stable arm. This indicates that there is a failure mode when an agent is {\em too} optimistic. Further, the resulting strategy of sticking it out for a long time can fail when $\theta$ is quite large. This reflects what \cite{wooten_precarious_2022} calls the ``effort paradox,'' where students encouraged to be gritty often end up burnt out.
\end{enumerate}

Observe that when the first agent switches, all agents have the same evidence about the striving arm. Likewise, when each agent switches, the remaining agents all have the same evidence, though different agents act differently, albeit all rationally, in response to this information. While it may at first seem disconcerting that people with access to the same evidence have {\em different} yet completely rational responses to it, this is is consistent with the philosophical thesis of Permissivism, which argues that ``some bodies of evidence permit more than one rational doxastic attitude toward a particular proposition'' \cite{lasonen-aarnio_permissivism_2023}. Thus, it is reasonable for different agents to have different beliefs about the underlying state of the world upon receiving a set of evidence (i.e., different agents have different beliefs about how long it is worth staying on an arm after receiving reward 0 for the first $s_A$ time). In fact, Morton argues that Permissivism exactly allows for grit to be conceived of as a factor in shifting people's behavior from the norm even when presented shared evidence. Indeed, in our model, we explicitly model this aspect of how beliefs + evidence $\rightarrow$ actions by having each agent hold different beliefs about the payoff slope $\alpha.$

This perspective also provides us a natural way in which to consider the optimal level of grit for this setting. In particular, if we can remove the uncertainty in the guess for $\alpha\,,$ then the only remaining uncertainty has to do with when the function will start improving. This suggests that agents who have guesses for $\alpha$ closer to the true $\alpha$ will fare better. We could interpret this as having expert advice or access to ``better'' information. 

\subsection{Modelling Grit: Discomfort Tolerance} \label{subsec:discomfort}

In this setting, we study another relevant aspect of grit, namely discomfort tolerance. To do so, let us first introduce a cost to playing the striving arm.

\paragraph*{Cost to strive}

Let us consider a setting in which there is a cost to strive. The reward profile / bandit arms $f_1$ and $f_2$ look almost as before, with $\alpha = 1\,$:
$f_1(t) = 1 \, \forall \, t$ and $f_2(t) = \begin{cases}
    -1 & t < \theta \\
    t-\theta & t \ge \theta
\end{cases}\,.
$

The negative reward models the cost of striving, for instance financial debt or effort expenditure that can be offset by rewards from the stable arm to ensure the agent is not ``in the negative,'' as
the agent is not allowed have negative reward at any point in time. This corresponds to not being able to take out a loan. Thus, an agent who does not have any ``savings'' coming in must play $f_1$ before they ever play $f_2\,.$ We suppose that smallest piece of time an agent can split between the arms is 1 unit. Later, in Section~\ref{sec:trust-fund}, we consider what happens when the cost of striving is subsidized by a ``trust fund.''

\paragraph*{Comfort}
Now, further, let us suppose an agent always wants to have average reward at least $\gamma\,,$ that is, at time $t\,,$ the agent wants their accrued reward to be at least $\gamma \cdot t\,.$ They still aim to optimize the competitive ratio as before, except now they do so subject to the constraint that $\forall \, t\,, $ the reward accrued up to that point is at least $\gamma \, \cdot \, t\,.$

\begin{definition}
    We say an agent wants to be $\gamma$-comfortable if at any time $t > 0\,,$ they need their net reward to be at least $\gamma \cdot t\,.$
\end{definition}

In general, the problem an agent playing this game solves is:
$$
\max \qquad  \frac{ALG}{OPT} \quad \text{ s.t. } \quad \frac{ALG_t}{t} \ge \gamma\,.
$$

Since the agent can play an arm for fractional amounts of time, an agent who requires $\gamma$ comfort will play $\alpha_\gamma$ time on the stable arm and then $1-\alpha_\gamma$ time on the striving arm, alternating between the two as soon as possible, for $\alpha_\gamma \coloneqq (\gamma + 1)/2$\,, the time duration that guarantees the desired average reward:
\begin{align*}
    ALG_t = \begin{cases}
        t & t \le \alpha_\gamma \\
        \alpha_\gamma-(t-\alpha_\gamma) = 2\alpha_\gamma - t & \alpha_\gamma < t \le 1
    \end{cases} \qquad &\Rightarrow \qquad 
    \frac{ALG_t}{t} = \begin{cases}
        1 & t \le \alpha_\gamma \\
        \frac{2\alpha_\gamma}{t} - 1 & \alpha_\gamma < t \le 1
    \end{cases} \\
   \text{for } \alpha_\gamma < t \le 1\,, \qquad \gamma = 2 \alpha_\gamma - 1 &\le \frac{2\alpha_\gamma}{t} - 1 \le 1\,.
\end{align*}

This is the most rational thing for them to do, since any additional steps on $f_1$ simply serve to offset future costs of $f_2$ but might prevent the agent from seeing the increase phase as quickly. We formalize this below and provide the proof in Appendix~\ref{appendix:proof-min-acc} (which proceeds via similar casework to the proof of Lemma~\ref{lemma:switchpt}).

\begin{definition}
    We call a strategy {\em minimally accumulating} for an agent who wants to be $\gamma$-comfortable if the average net reward at time $t\,,$ $ALG_t/t$ is exactly 1 until reaching time $\alpha_\gamma \coloneqq (\gamma + 1)/2$ when playing the stable arm and strictly decreasing until reaching {\em value} $\gamma$ when playing the striving arm. In other words, the agent plays the stable arm for $\alpha_\gamma$ time followed by the striving arm for $1-\alpha_\gamma$ time and repeats. 
\end{definition}
\begin{lemma} \label{lemma:min-acc-works}
    For each strategy that ``stockpiles'' reward along the way, there exists a minimally accumulating strategy that nets total reward at least as much as the stockpiling strategy. 
\end{lemma}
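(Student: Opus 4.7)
The plan is to mirror the exchange argument used in the proof of Lemma~\ref{lemma:switchpt}. Given any stockpiling strategy $S$, I will transform $S$ into the minimally accumulating strategy by a sequence of local swaps, each of which preserves the comfort constraint and does not decrease total reward. The key structural observation is that the total reward over $[0,T]$ is determined by (i) the total time spent on each arm at the horizon and (ii) how early (in real time) the cumulative time on $f_2$ first hits $\theta$: the $f_1$-reward depends only on total $f_1$-time, the pre-$\theta$ contribution from $f_2$ depends only on cumulative $f_2$-time, and the post-$\theta$ contribution is increasing in how long ago (in real time) the threshold was crossed. Any swap that preserves the per-arm time totals and pushes $f_2$-play earlier in real time can therefore only weakly increase total reward.

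Concretely, I would proceed as follows. First, locate the first moment at which $S$ plays $f_1$ while its accumulated reward is strictly greater than $\gamma t$ -- such a moment must exist because $S$ is not minimally accumulating. Next, find the first subsequent interval on which $S$ plays $f_2$, and swap an infinitesimal piece of this $f_2$-play with the preceding buffer-surplus piece of $f_1$-play. This swap keeps $t_1$ and $t_2$ (total times on each arm) fixed, so every term of the reward except the post-$\theta$ growth of $f_2$ is unchanged; the post-$\theta$ growth is either unchanged (if cumulative $f_2$-time has not yet reached $\theta$) or strictly better (because the crossing of $\theta$ happens earlier in real time). Finally, iterate: each swap monotonically moves the strategy toward the minimally accumulating pattern of $\alpha_\gamma$ on $f_1$ followed by $1-\alpha_\gamma$ on $f_2$, and the process terminates when the comfort constraint is tight at the end of every length-$1$ block.

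The main obstacle is verifying that each swap respects the comfort constraint $ALG_t \ge \gamma t$ at all intermediate times. Replacing a sliver of $f_1$ (reward $+1$) with a sliver of $f_2$ can lower $ALG_t$ by up to twice that sliver's length on the sub-interval between the swapped pieces; this is absorbed by the fact that the swap is performed precisely at a moment where $S$ has strict buffer above $\gamma t$, so a sufficiently small swap is always feasible. A secondary subtlety is that, once cumulative time on $f_2$ crosses $\theta$, playing $f_2$ becomes rewarding and need no longer be restricted by the comfort constraint; I would handle this by stopping the swap procedure once the transformed strategy matches the minimally accumulating pattern up to the real time at which its cumulative $f_2$-time first hits $\theta$, at which point both strategies have the same buffer and any remaining time on $f_2$ only adds non-negative reward, so the comparison is straightforward.
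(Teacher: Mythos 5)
Your overall strategy---an iterated local exchange in the spirit of Lemma~\ref{lemma:switchpt}---is genuinely different from the paper's proof, which is a one-shot accounting argument: since each arm's reward is a function only of the cumulative time that arm has been played, the total reward depends only on the per-arm totals $T_1, T_2$, so the paper simply re-lays those totals out in the minimally accumulating pattern (and, when $T_2 > \theta$, optionally moves the surplus stable time $T_1 - \frac{1+\gamma}{1-\gamma}\theta$ onto the striving arm at the end if that pays more). This exposes the main conceptual error in your writeup: your ``key structural observation'' asserts that the post-$\theta$ contribution is increasing in how early, \emph{in real time}, the threshold is crossed. That is false in this model. The striving arm contributes $\int_0^{T_2} f_2(u)\,du$ regardless of how its plays are interleaved with $f_1$, because $f_2$ is evaluated at cumulative play time, not wall-clock time. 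Your swaps are therefore exactly reward-neutral rather than reward-improving; the conclusion of each swap step happens to survive, but for a different reason than the one you give, and recognizing the correct reason collapses your entire procedure into the paper's short argument. The only thing the rearrangement actually has to establish is comfort feasibility of the front-loaded pattern.

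The exchange machinery itself also has a gap. Your swap-selection rule---find the first moment where the strategy plays $f_1$ with $ALG_t$ strictly above $\gamma t$---does not characterize departure from the minimally accumulating pattern: that pattern itself carries a strict buffer throughout the interior of every stable sub-block (on $(0,\alpha_\gamma)$ its average reward is $1 > \gamma$), so your rule fires on the target strategy too, and it is not clear the iteration terminates at the period-one $\alpha_\gamma$-then-$(1-\alpha_\gamma)$ pattern rather than overshooting it. Together with the reliance on infinitesimal swaps (which needs a maximal-swap or limiting argument to terminate), this portion should be replaced by a direct construction: fix $T_2$, lay it out greedily in unit blocks of $\alpha_\gamma$ on $f_1$ followed by $1-\alpha_\gamma$ on $f_2$, check comfort feasibility using the fact that the original strategy satisfied the constraint at the horizon, and invoke invariance of the reward under rearrangement. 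One point in your favor: your observation that the buffer is nondecreasing while $f_1$ is played (so a strict surplus persists until the next striving interval) is exactly the fact needed to verify that any single comfort check goes through.
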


We present the result for the competitive ratio and reward for a $\gamma$-comfortable agent (proof in Appendix~\ref{appendix:proof-gamma-comfort}).

\begin{lemma} \label{lemma:gamma-comfort}
Suppose 
an agent who wants to be $\gamma$-comfortable plays $f_1$ for $\alpha_\gamma$ time followed by $f_2$ for $1-\alpha_\gamma$ time before reverting back to $f_1\,$ and continuing the process. Then, the agent wanting to maximize their competitive ratio subject to the constraint of the average reward always being at least $\gamma$ will switch after absolute time $T - \frac{\gamma}{2} - \frac12 \sqrt{\gamma^2 + 4T(2-\gamma)}$, achieving competitive ratio $\gamma + \frac{\gamma(1-\gamma)}{2T} + \frac{(1-\gamma) \sqrt{\gamma^2 - 4T (2 - \gamma)}}{2T}$. This corresponds to $\frac{1-\gamma}{2} \cdot \left(T - \frac{\gamma}{2} - \frac12 \sqrt{\gamma^2 + 4T(2-\gamma)} \right)$ time on the striving arm.
\end{lemma}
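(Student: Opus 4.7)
The plan is to mimic the proof of Lemma~\ref{lemma:alphat-switch}: reduce the agent's strategy to a single scalar switch point $s$, identify the two adversarial values of $\theta$ that pin down the worst-case competitive ratio, and solve for $s$ that balances them. First, by Lemma~\ref{lemma:min-acc-works} it suffices to consider minimally accumulating strategies, so the agent's policy is fully specified by a single switch time $s$: up to absolute time $s$ the agent alternates $\alpha_\gamma = (1+\gamma)/2$ units on $f_1$ with $(1-\alpha_\gamma) = (1-\gamma)/2$ units on $f_2$, so the striving arm has been played $\tau \coloneqq (1-\gamma)s/2$ units by absolute time $s$; after $s$ the agent commits to $f_1$ permanently, or to $f_2$ if the end of the cost phase has already been witnessed during alternation.

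Next, I would write down $ALG(\theta)$ and identify $OPT(\theta)$. Each absolute unit of alternation nets exactly $\gamma$ reward (the $(1+\gamma)/2$ gain on $f_1$ offset by the $(1-\gamma)/2$ cost on $f_2$), so whenever $\theta > \tau$ the agent accrues $\gamma s$ during alternation and $T - s$ from the stable arm afterwards, giving $ALG(\theta) = T - (1-\gamma)s$; whenever $\theta \le \tau$, the agent pivots to $f_2$ at absolute time $2\theta/(1-\gamma)$ and picks up an additional $\frac{1}{2}(T - 2\theta/(1-\gamma))^2$ from the gain phase. As in Lemma~\ref{lemma:alphat-switch}, the two adversarial regimes for $\theta$ are (A) $\theta$ large, where $OPT = T$ (pure stable) and the ratio reduces to $(T - (1-\gamma)s)/T$, and (B) $\theta$ just above $\tau$, where $OPT$ can exploit the striving arm that the agent just failed to unlock and so does strictly better than the agent.

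Equating the worst-case ratios in regimes (A) and (B) produces a balance equation that simplifies to a quadratic in $x \coloneqq T - s$ of the form $x(x - \gamma) = T(2 - \gamma)$; its non-negative root is $x = \tfrac{1}{2}\bigl(\gamma + \sqrt{\gamma^2 + 4T(2-\gamma)}\bigr)$, recovering exactly the claimed switch time. Substituting back into Ratio (A), which equals $\gamma + (1-\gamma)(T-s)/T$, and expanding using this expression for $T-s$ yields the stated competitive ratio, while the time on the striving arm is immediately $\tau = (1-\gamma)s/2$. The main obstacle I expect is the careful characterization of $OPT(\theta)$ in regime (B): since $OPT$ cannot go negative (and, under the natural reading of the problem, must also remain $\gamma$-comfortable), it cannot simply commit to $f_2$ from $t = 0$, so one has to work out its optimal hindsight mix of saving on $f_1$ and then exploiting $f_2$. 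It is this calculation, rather than the ratio-balancing step, that produces the particular $\gamma$-dependent coefficients in the quadratic $x(x-\gamma) = T(2-\gamma)$ and distinguishes the comfort-constrained analysis from the cleaner one in Lemma~\ref{lemma:alphat-switch}.
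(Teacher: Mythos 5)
Your skeleton matches the paper's: reduce to a single switch point $s$ via the minimally-accumulating reduction, observe that each absolute unit of alternation nets exactly $\gamma$ so that $ALG = \gamma s + (T-s) = T-(1-\gamma)s$ in both worst cases, take $OPT = T$ in regime (A), balance against regime (B), and read the ratio off as $\gamma + (1-\gamma)(T-s)/T$. All of that is exactly what the paper does, and your quadratic $x(x-\gamma) = T(2-\gamma)$ in $x = T-s$ is precisely the one the paper solves, so the root, the back-substitution, and $\tau = \tfrac{1-\gamma}{2}s$ all go through.

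The gap is the step you yourself flag as the main obstacle: you never specify the regime-(B) benchmark, and the quadratic is asserted rather than derived from it. The paper does not compute the ``optimal hindsight mix of saving on $f_1$ and then exploiting $f_2$'' that you propose. It uses the direct analogue of Lemma~\ref{lemma:alphat-switch}: the benchmark is the agent who alternates identically up to absolute time $s$ (accruing $\gamma s$) and then, having just unlocked the striving arm, rides it for the remaining $T-s$ time, giving a denominator of $\gamma s + \tfrac12(T-s)^2$. If you instead carried out your stated plan --- the true hindsight optimum subject only to non-negativity, which front-loads just enough $f_1$ to cover the cost phase and then exploits $f_2$ for roughly $T - 2\theta = T-(1-\gamma)s$ time, earning about $\tfrac12(T-(1-\gamma)s)^2$ --- you would get a far larger denominator and a completely different balance equation, hence a different switch point; your plan as written does not produce $x(x-\gamma)=T(2-\gamma)$. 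So you must either adopt the paper's weaker ``stayed infinitesimally longer'' benchmark explicitly, or accept that your route proves a different statement. (A further caveat when reconciling with the source: the paper's own expansion of $T = \gamma s + \tfrac12(T-s)^2$ into $s^2 - 2sT + T^2 + \gamma s - 2T = 0$ drops a factor of $2$ on the $\gamma s$ term, and it is the latter equation --- identical to your quadratic --- that yields the closed forms in the lemma; the two coincide only at $\gamma = 0$, which is why the $\gamma=0$ sanity check passes either way.)
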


\begin{remark}
    Let us consider a concrete numerical example: suppose $T = 150\,, \gamma = 0.5.$ Then, the agent will switch after about time 135. However, of that time, only about 34 would have been spent exploring, with the remaining 101 time spent on the stable arm. The competitive ratio achieved is 0.55.
\end{remark}

\begin{remark}
    
For insight, let us next consider the behavior for extreme values of $\gamma\,:$ if $\gamma = 0\,,$ the first two terms go away, and the competitive ratio is $\frac{\sqrt{2T}}{T} = \frac{2}{\sqrt{T}}\,.$ This exactly aligns with our computation before. On the other hand, if $\gamma \rightarrow 1\,,$ the competitive ratio actually nears 1! This is because the best possible thing to do for an agent who requires complete comfort is to always play the stable arm. Indeed, to better understand this outcome, let us also investigate the total amount of time spent exploring as a function of $\gamma\,.$ Taking the derivative of the expression for exploration time with respect to $\gamma$, we can see that it is negative for all $\gamma \in [0, 1]\,.$ 
This tells us that as an agent requires more ``comfort,'' they spend less time exploring on the striving arm, and so while their competitive ratio improves, their chance of witnessing the growth in the striving arm and benefiting from it is low.
\end{remark}
\section{Financial Support} \label{sec:trust-fund}

A natural question following this discussion is how to incentivize a gritty agent to explore for longer. We have already seen that simply encouraging more grit could do more harm than good. We could encourage the agent to give up more comfort, but often a baseline level of comfort cannot be foregone -- for instance, one may have to pay rent, buy food, etc. For this, we study the setting where there is a cost to striving and the agent must have discomfort tolerance $\gamma > 0$. In this section, we consider an implementation of a ``trust fund,'' i.e., financial support we could provide an agent, and then show what conclusions we can draw about the strategy followed by an agent and their eventual reward. Here we consider the simplest possible implementation, which already has interesting outcomes, and we study a more nuanced implementation in Appendix~\ref{appendix:fixed-time-support}.

\subsection{No safety net} Now, since an agent without a safety net must alternate between $f_1$ and $f_2\,,$ the competitive ratio maximizing strategy is computed as follows:

\begin{align}
    \text{competitive ratio if never increase } &= \frac{(1 - 1) \cdot \frac s2 + T - s}{T} \\
    \text{competitive ratio if increase right after switch } &= \frac{(1 - 1) \cdot \frac s2 + T - s}{(1 - 1) \cdot \frac s2 + \frac{1}{2} (T-s)^2}\,. \\
    \text{equalizing, } 
    s &= T  - \sqrt{2T}\,.
\end{align}

The duration of time after which the switch happens is $T-\sqrt{2T}\,,$ but the proportion of that time actually spend exploring the striving arm is half that, namely $\frac{T-\sqrt{2T}}{2}\,.$ Thus, if $\theta > \frac{T-\sqrt{2T}}{2}\,,$ this agent will not be able to reap the benefits of striving.

\subsection{Free reimbursement} In this model, an agent with a support network gets reimbursed for free each time they taking a striving step. This is analogous to having a benefactor who financially supports the agent as much as needed to prevent their net reward from being negative. In this case, the effective arms for an agent with such unconditional support are now:

$$
\hat{f}_1(t) = 1 \, \forall \, t \qquad \hat{f}_2(t) = \begin{cases}
    0 & t < \theta \\
    t-\theta & t \ge \theta
\end{cases}\,.
$$

Thus, the analysis is as before (Lemma~\ref{lemma:alphat-switch} with $\alpha, \alphat=1$), and the agent will spend $T - \sqrt{2T}$ time on the striving arm before switching to the stable arm. Remarkably, the duration of time after which the agent without a safety net and the agent with unconditional support ``give up'' is the same! However, due to the safety net, the agent with it can explore the striving arm for twice as long. In particular, if $\theta \in [\frac{T-\sqrt{2T}}{2}, T -\sqrt{2T}]\,,$ then the latter agent gets $\frac 12 (T - \theta)^2 \ge \frac 12 \cdot 2T = T$ reward, while one without the safety net only gets $\sqrt{2T}$.

In Appendix~\ref{appendix:fixed-time-support}, we extend this to a model where the benefactor only promises support for a fixed amount of time. We show that in that case also, a similar qualitative result holds -- agents with and without support ``give up'' on striving at the same absolute time but the agent with the financial support gets to explore for a multiplicative factor longer. Note also that by mapping this onto the discomfort tolerance perspective, we can see that in a world where the agent must maintain a positive discomfort tolerance but the agent lacks financial support, they must spend less time exploring, whereas if they have financial support, they can trivially explore longer.

\subsection{Which wins? Trust fund or grit?}

In this section, we combine the pieces from the previous sections to understand the interplay between grit and financial support. We primarily present conclusions in this section; the calculations behind these conclusions are presented in detail in Appendix~\ref{appendix:combine}. Here, the rate of increase is unknown, and there is also a cost to striving.
Let us make two comparisons. First, let us investigate what happens when someone without a trust fund becomes more gritty. Then, we will study the effect of a trust fund on two people, one of whom is grittier than the other.

\textbf{\underline{No trust fund, increased grit.}} For this, let us compare the first two rows of Table~\ref{tab:my_label1}. Immediately, we see that increased grit, as before, leads to increased exploration time. In the last column, we present the reward that the agent receives if they don't witness the start of the payoff before switching, which we call ``stable reward'' for short. There, we can also see that the stable reward is {\em lower} when the agent is grittier. 

\textbf{\underline{Introduce trust fund, same grit.}} Now, let us compare the first and third rows of Table~\ref{tab:my_label1}. In this case, we see that at the same grit level, the presence of a safety net allows for a much longer exploration horizon. Both in the presence of and in the absence of the safety net, the stable reward is the same. This shows that the presence of a safety net allows for essentially ``free'' exploration.

\subsubsection{Discussion} We can view these results from two perspectives. One perspective is descriptive: we observe that providing a safety net increases exploration time essentially ``for free.'' This happens since the agent does not have to split their time to ensure they are not in the negative. The other perspective is prescriptive: in settings where an external entity aims to encourage exploratory behavior, encouraging increased grittiness could lead to worse outcomes if the risk doesn't pay off. While we don't study this in this work, this could lead to future agents being discouraged from taking the grittier course of action. On the other hand, providing support to an agent who is already gritty encourages exploration without the prospect of bad outcomes in case the taken risk doesn't pay off. Thus, telling a gritty agent to be grittier is worse than giving them financial support to extend their exploration horizon. This corroborates and provides a simple mechanistic explanation for what \cite{morton2019grit, wooten_precarious_2022} observe.

\begin{table}[h]
    \centering
    \begin{tabular}{c|c|c|c}
        grit level & safety net & exploration time & stable reward \\
        \hline 
        $\alphat_1$ & no safety net & $\frac{T}{2} - \sqrt{\frac{T}{2\alphat_1}}$ & $\sqrt{\frac{2T}{\alphat_1}}$\\
        $\alphat_2 > \alphat_1$ & no safety net & $\frac{T}{2} - \sqrt{\frac{T}{2\alphat_2}}$ & $\sqrt{\frac{2T}{\alphat_2}}$ \\
        $\alphat_1$ & safety net, free reimbursement & $T - \sqrt{\frac{2T}{\alphat_1}}$ & $\sqrt{\frac{2T}{\alphat_1}}$\\
    \end{tabular}
    \caption{Exploration time and stable reward (reward resulting from switching to $f_1$ following unsuccessful exploration on $f_2$) for various grit levels and safety nets.}
    \label{tab:my_label1}
\end{table}

\section{Rationality as Being Bayesian} \label{sec:bayesian}

\subsection{Rationality in This Model}

Until now, we have focused on a model of rationality that aims to minimize regret in hindsight, which we quantify through the competitive ratio. In this section, we instead consider a notion of rationality that is forward-looking, as it aims to directly quantify uncertainty about the future.  In particular, we suppose agents have a prior distribution $P$ that is supported on $\{1, 2, \dots, T-1, T\} \cup \{ N \},$ where $N$ represents ``never,'' and $P(x) \coloneqq \mathbb{P}\left[\text{increase occurs at time } x\right]\,.$

The agent updates their posterior as follows. Suppose at time $t-1$, the posterior is $P^{(t-1)}.$ If they play the arm for one time step and the increase has not yet occurred, then they update:

$$
P^{(t)}(x) \leftarrow \frac{P^{(t-1)}(x)}{1 - P^{(t-1)}(t)} \qquad \forall x > t\,.
$$

If the mass on the numerical elements of the support is 0, then the update places probability 1 onto the support element $N\,,$ meaning the agent's posterior suggests the arm will {\em never} payoff.

\subsection{Setting}
As before, there are two arms:
$$
f_1(t) = 1\, \forall \, t \qquad f_2(t) = \begin{cases} 0 & t<\theta \\ t-\theta & t\ge \theta
\end{cases}
\,.
$$

Unlike before, we assume the arm is played in discrete time steps of size 1, i.e, the agent pulling the arm increases the time by 1. This is so that the prior can be defined over a discrete domain.

Based on this and the posterior update described above, we have that the recursive formula for the expected reward is:
$$
Q(t) = \frac{1}{2}(T-t)^2 \cdot p_t + V(t+1) \cdot (1-p_t)\,,
$$
and reward of the policy of being on the striving arm given no increase so far and haven't switched so far is:
$$
V(t) = \max \{ T-t, Q(t)  \}
$$

where  $p_t = P(\text{ increase starts at time } t | \text{ increase has not started up until } t-1)\,.$ The boundary condition is $Q(T) = 0\,.$ The agent switches when the reward from switching exceeds the expected reward from staying.

\begin{lemma}
    Suppose $Q(t), V(t)$ are defined as above. Then, $V(t)$ computes the total reward accrued from time $t$ up to time $T$ by a policy that maximizes expected reward over its posterior.
\end{lemma}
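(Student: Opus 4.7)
The plan is to prove the claim by backward induction on $t$, working from $t = T$ down. Let $V^*(t)$ denote the supremum over all policies of the expected reward accumulated from time $t$ to $T$, where the expectation is taken with respect to the agent's posterior $P^{(t)}$ (conditional on not having observed the increase by time $t$). The goal is to show $V^*(t) = V(t)$. At the base case $t = T$ there is no time remaining, so $V^*(T) = 0$, and the definitions give $V(T) = \max\{T - T, Q(T)\} = \max\{0, 0\} = 0$, matching.

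For the inductive step, assume $V^*(t+1) = V(t+1)$. At time $t$, having not yet seen the increase, the agent has exactly two meaningful actions: switch to $f_1$ permanently, which yields the deterministic reward $T - t$; or play $f_2$ for one more step and then act optimally thereafter. In the latter case, with probability $p_t$ the increase occurs at step $t$, in which case the posterior collapses to $\theta = t$ and the best continuation is to remain on $f_2$ for the remaining $T - t$ time, yielding $\tfrac{1}{2}(T - t)^2$. With probability $1 - p_t$ no increase is observed, the posterior updates to $P^{(t+1)}$, and by the inductive hypothesis the optimal continuation has expected value $V(t+1)$. Taking expectations, the continuation action attains $Q(t)$, so the better of the two options yields $\max\{T - t, Q(t)\} = V(t)$, as required.

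The argument hinges on two reductions which I expect to be the main technical obstacles. The first is that the state at time $t$ (before observing the increase) is summarized entirely by $t$, so the recursion is well-posed; this follows because $P^{(t)}$ is, by construction, the posterior given the complete observation history ``no increase in the first $t$ pulls,'' and no additional information is available to the agent. The second, and the genuinely nontrivial one, is the assertion that once the increase is observed, it is optimal to remain on $f_2$ through time $T$, which is what licenses the $\tfrac{1}{2}(T - t)^2$ term in $Q(t)$. For this, one invokes a pointwise interleaving argument in the spirit of Lemma~\ref{lemma:switchpt}: once $\theta$ is known exactly, any schedule that interleaves $f_1$ after the increase is weakly dominated by the schedule that commits to $f_2$, because the per-step reward on $f_2$ grows linearly past the increase and quickly exceeds the stable reward of $1$. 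With these two reductions in hand, the rest is a standard Bellman-equation verification, and the recursive definitions of $Q$ and $V$ encode the dynamic program exactly.
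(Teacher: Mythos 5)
Your proof is correct and follows essentially the same route as the paper's: a backward induction verifying that the $Q$/$V$ recursion is the Bellman equation for the two available actions (switch to $f_1$ permanently versus pull $f_2$ once more and continue optimally). The only difference is that you make explicit two reductions the paper leaves implicit --- that $t$ is a sufficient statistic for the posterior state, and that committing to $f_2$ after the increase is observed is justified by an interleaving argument --- which is a welcome clarification rather than a departure.
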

\begin{proof}
We show this using induction. From the boundary condition, we have that $V(T) = \max\{0, 0 \} = 0.$ For the base case, we consider $t = T-1\,.$ There are two possible actions the policy could take. Suppose the policy plays the striving arm: the first possible outcome is that it pays off with probability $p_{T-1}$ as defined above, and the second possible outcome is that it doesn't, and the policy accrues reward $V(T)$. We have that $Q(T-1) = \frac 12 p_{T-1} + 0 (1-p_{T-1}) = \frac{p_{T-1}}{2}.$ If the policy does not play the striving arm and instead switches to the stable arm for good, it is guaranteed $T-(T-1) = 1$ reward. Thus, an expected reward maximizing policy will accrue $V(T-1) = \max \{1, \frac{p_{T-1}}{2}\} = 1\,$ reward at that time.

Next, consider the inductive assumption that for $t = t' < T\,, V(t')$ gives the correct total reward accrued from $t'$ to $T.$ Then, we must show that $V(t'-1)$ indeed is the correct total reward accrued from $t'-1$ to $T\,.$ If we play the stable arm from here onward (and we know from Lemma~\ref{lemma:min-acc-works} that once we switch to the stable arm, we have no reason to switch back to the striving arm), then we would get reward $T-(t'-1) = T - t' + 1.$ On the other hand, if we play the striving arm for one step, then if it pays off immediately (probability $p_{t'-1}$), we get reward $\frac 12 (T-t'+1)^2\,,$ and if not yet (probability $1-p_{t'-1}$), we get reward $V(t')\,.$ Since we know that $V(t')$ is the correct total reward accrued from $t'$ to $T$ by the inductive assumption, $Q(t'-1) = \frac 12 (T-t'+1)^2 \, p_{t'-1} + V(t') \, (1-p_{t'-1})$ is the correct total reward accrued from $t'-1$ to $T$ if the policy plays the striving arm. Finally, we do choose the action that maximizes expected reward, and so $V(t'-1) = \max \{ T-t'+1, Q(t'-1)  \}$ is indeed the correct total reward accrued from $t'$ to $T\,.$
\end{proof}

In this view, we may study the uncertainty tolerance aspect of grit. If two agents have similar priors, i.e., same family of distribution and same mean, then the variance of the prior reflects how tolerant they are to uncertainty about when the striving arm will pay off. We primarily study how this uncertainty tolerance affects the policy an agent follows. We can also then understand how the reward is affected as before.

\subsection{Results for Modelling Grit as Uncertainty Tolerance}

We consider the case where an agent has a Gaussian prior on when the striving arm will pay off. This corresponds to having some understanding of {\em when} the arm might start paying off but not being entirely sure that it will pay off then so allowing for some latitude. Thus, the variance of the Gaussian prior corresponds to the uncertainty tolerance of the agent.

We find that as the variance of the prior increases, the point $s$ at which the agent reverts to the stable arm also increases. Thus, as an agent becomes more gritty, there is a wider range in which if $\theta$ lies, they will witness it. At the same time, if $\theta$ lies above the blue curve in Figure~\ref{fig:gauss-bayes}, the agent would not witness the increase, and they would also collect less stable reward. From the asymptoting shape, we can conclude that beyond a certain point, uncertainty tolerance has limited benefits, as the stable reward decreases but the additional region is not increasing by much.

\begin{figure}
    \centering
    \includegraphics[width=0.5\linewidth]{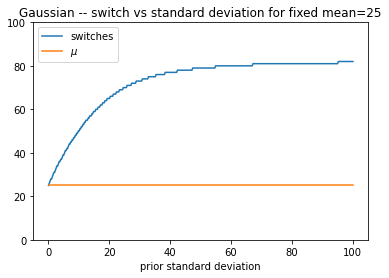}
    \caption{Switch point as a function of standard deviation $\sigma$ for prior of $\mathcal{N}(25, \sigma^2).$}
    \label{fig:gauss-bayes}
\end{figure}

Morton and Paul note that:
\begin{quote}
    ``Other things being equal, the gritty agent’s evidential threshold for updating her expectations of success will tend to be higher than the threshold an impartial observer would use. This is not because the perspective of the impartial observer is epistemically privileged, however; the Permissivist latitude applies to the policies of the agent and the observer alike. Rather, it is because the observer has no need to respond to the evidence in a way that guards against premature despair, and this should be reflected in his evidential policies.'' (p. 195 in \cite{morton2019grit})
\end{quote}

Our results corroborate this -- if we suppose the `impartial observer' switches if the striving arm has not yet paid off when $s$ reaches their expectation for when they payoff begins, then the gritty agent indeed has a higher threshold for updating their expectation of success.

\section{Discussion and Conclusion}

\subsection{Discussion of Modelling Choices}

\textbf{Modelling Perspective} In this work, we provide a theoretical model for the following aspects of grit: we define a two-armed improving bandit instance that represents the decision-making problem agents are faced with. We further define two different objectives a decision-making agent might prioritize. Finally, we model different aspects of grit with careful study of variations of parameters in the bandit instance. While there are many folk notions of “grit”, the aspects we choose to model are based on the abstraction and formalization provided by \cite{morton2019grit}. We do not seek to evaluate whether or not their account is valid but rather to quantitatively formalize their notions and study the implications of their analysis. 
In doing so, we find additional confirmation for empirical work regarding the value of ``grants'' provided to people attempting ambitious choices. For instance, a study of grants given to entrepreneurs in Burkina Faso shows that even if immediate profits aren't improved, providing this financial support increases innovation and improves business practices, suggesting that those who get support get more time to explore and build a good foundation for their business \cite{grimm_short-term_2021}. Similarly, a study in Kenya found that intervening with grants to youth entrepreneurs at a time of crisis helped individuals maintain their businesses and produce more profits \cite{domenella_can_2021}. These empirical studies suggest that if someone is inclined to take an ambitious action, then supporting them helps improve outcomes, exactly what is captured by our model (as summarized in Table~\ref{tab:my_label1}).

\textbf{Strengths} A strength of our model is that the same multi-armed bandits instance can be used to study many different facets of grit, and resulting outcomes are visible just through this two-armed bandits instance. Also, our modelling allows for a modular understanding of the effect of grit on first, behavior and second, outcome or reward associated with that behavior. This, then, allows us to study interventions that encourage similar behavior with less outcome risk. 

\textbf{Weaknesses} Our model is not without its shortcomings. As mentioned before, we are limited to studying an agent's choice between two options in this framework, and often the stable option does not just have the same return for all time. Human rationality is rarely, if ever, executed exactly as competitive ratio maximization or expected reward maximization over a prior. However, these seem like natural simplifications that provide a starting point for quantitative analysis of this trait.

\subsection{Conclusion}
In this work, we introduced a quantitative model for studying the impact of grit and financial support on decision-making between a long-term, ambitious end and immediate-reward, stable end. We provided a improving two-armed bandit model in which to study this and formalized notions of rationality and grit that gave rise to agents' strategies for this instance. Our modeling engages with several prior works in the social science literature, and we hope the approach and framework provided admit future work on quantitative study of grit.
\newpage
\printbibliography

\clearpage

\appendix
\section{Related Work} \label{appendix:related-work}

We survey two categories of work related to ours. First, we discuss work from the social sciences that observes grit in people and analyzes its role and impact in society. We also include work that abstracts and formalizes what differentiates grit from other related traits. Next, we focus on relevant computer science literature related to the formal frameworks we use in our analysis.

One of the most popular studies of grit is psychologist Angela Duckworth's book \cite{duckworth2016grit}. In it, she discusses how across fields and circumstances, the trait of grit distinguishes people who succeed from people who succumb to their circumstances. This work had a significant impact on popular understanding of the impact of hard work on success, suggesting that grit is often a more important factor than talent or proclivity. Relatedly, psychologist Martin Seligman has a large body of work encouraging positive attitudes as a path to success and good outcomes \cite{seligman_authentic_2002, seligman_learned_2006}. School systems like the Knowledge is Power Program (KIPP) Charter Schools have operationalized this somewhat academic research (including by bringing Seligman on as a consultant \cite{Gibbon_seligman_2020}) to build their signature ``no excuses'' philosophy of education, emphasizing personal responsibility and grit as paths to success. 

More recently, sociologist Tom Wooten studied the impact of the ``no excuses'' educational approach in his dissertation \cite{wooten_precarious_2022}, using ethnographic methods to study outcomes for six students who attended schools with this educational philosophy, particularly exploring the mechanisms by which these systems perpetuate poverty. In particular, he argues that upward mobility in the United States is fundamentally precarious, with seemingly small disruptions destabilizing those coming from impoverished backgrounds much more than those who have more financial stability. Wooten's findings are the impetus for us to investigate the impacts of financial support on the behavior of gritty agents in Section~\ref{sec:trust-fund}. He also reflects on the pitfalls of an educational system that overemphasizes grit, writing that, ``In lieu of such large, fundamental changes, my research also points to smaller policy shifts that could make a big difference. One shift is that primary and secondary schools should carefully examine the explicit and implicit lessons they teach their students about
`grit,' altering their curricula to emphasize balance and pacing.'' \par

Abstracting findings from several such observational studies, \cite{morton2019grit} develop a philosophical theory of grit, arguing that it is possible due to a philosophical concept called \textit{Permissivism} for two agents to interpret the same body of evidence in completely different ways, and in fact grit is a consequence of Permissivism. We build heavily on the abstractions derived in this work. \par

Finally, we survey some computer science literature that we draw on in order to quantitatively formalize and study the question of decision-making with and without grit. The perspectives we take on rationality are inspired by objectives that are commonly optimized in the computer science literature, including the competitive ratio studied in online learning \cite{borodin_online_2005} and the Bayesian approach to uncertainty quantification  \cite{bernardosmith_bayesian_2000}. The framework we use to represent the decision problem is an instance of the multi-armed bandits problem, a well-studied framework for making decisions when the payoff is unknown \cite{slivkins_introduction_2022}. More particularly, we suppose the bandit arms have structured reward, and the structure of interest is improving, first formalized by \cite{heidari} and studied in general by \cite{patil_mitigating_2023, blum_nearly-tight_2024}.
\section{Why This Instance and Not Something Else?} \label{subsubsec:whynotflat}

The instance defined in Equation~\ref{eqn:main-instance} reflects settings where grit pays off eventually but even then not immediately, i.e., there is only a gradual increase of reward even after $t = \theta\,.$ This captures certain situations we may be interested in -- for instance, consider a computer science theory PhD program, where the first several years are spent taking classes to bolster one's mathematical skills, following which there is still a ramp up period as the student gets accustomed to doing research. Thus, it is natural to suppose the early years of PhD candidacy yield some research progress, while the later years yield significantly more. On the other hand, there are many natural cases in which once the payoff starts, it reaches its complete potential immediately; a natural example of this is when well-digging or searching for oil. In these cases, a bandit arm formulation more like the following may be more reasonable:
$$
f_1(t) = 1 \, \forall \, t \qquad f_2(t) = \begin{cases}
    0 & t < \theta \\
    m & t \ge \theta
\end{cases}\,.
$$

If we aim to maximize the competitive ratio, we consider two extreme worst cases: first, one in which the agent spent $s$ time on the striving arm but instead should have spent all their time on the stable arm; second, one in which the agent spent $s$ time on the striving arm to no avail but would've witnessed the increase on the striving arm if they'd stuck around infinitesimally longer. In this case, the former is $(T-s)/T\,,$ which is decreasing with $s\,,$ while the latter is $(T-s)/(m(T-s)) = 1/m\,,$ which is non-increasing. This implies that as long as the agent switches to the stable arm before $s = (1-1/m)T\,$, they achieve competitive ratio at least $1/m\,.$ 
Thus, achieving the optimal competitive ratio is trivial and there is no strategy to the game. Owing to this, we study a slightly more complicated setting -- the instance in which $f_2$ increases linearly once it starts paying off.

In general, the non-trivial instances are ones in which the first competitive ratio extreme case is decreasing with $s$ and the second competitive ratio extreme case is increasing with $s\,.$ If we define $F_2 \coloneqq \int f_2\,,$ we find $F_2$ needs to be increasing with $s$ in order to find a sensible switch point that balances the worse cases, i.e.:
$$
\frac{T-s}{T} = \frac{T-s}{F_2(T-s)} \Rightarrow s = T - F_2^{-1}(T) \Rightarrow CR = \frac{F_2^{-1}(T)}{T}\,.
$$

\section{Proofs of Results}

\subsection{Proof of Lemma~\ref{lemma:alphat-switch}} \label{appendix:proof-alphat-switch}
\begin{proof}

Recall our previous argument that all deterministic strategies boil down to playing $f_2$ for $s$ steps and then switching to $f_1$. With that in mind, there are two extremes of what could happen to the competitive ratio, one of which decreases with the longer spent on $f_2$, and the other of which increases with time spent on $f_2\,.$ Thus, we compute them and equalize:
\begin{align}
    \text{competitive ratio if never increase } &= \frac{T-s}{T} \\
    \text{compeititve ratio if increase right after switch } &= \frac{T-s}{\frac{\alphat}{2} (T-s)^2}\,. \\
    \text{equalizing, } \frac{T-s}{T} &= \frac{T-s}{\frac{\alphat}{2} (T-s)^2} \\
    s &= T - \sqrt{\frac{2T}{\alphat}}\,.
\end{align}

\end{proof}

\subsection{Visual Depiction of Reward Analysis} \label{appendix:reward-theta-vis}
See Figure~\ref{fig:reward-visual}.
\begin{figure}[h!]
    \centering
    \includegraphics[width=0.85\linewidth]{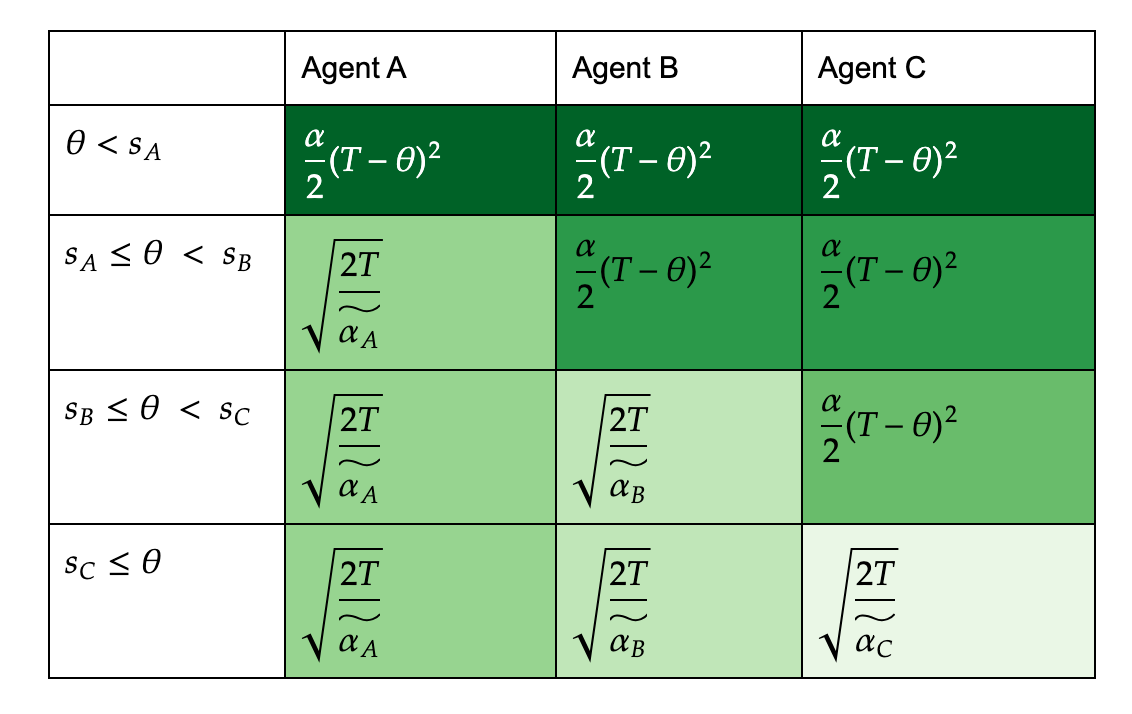}
    \caption{This table shows the reward the agents described in Section~\ref{subsec:optimism} receive if $\theta\,,$ the true threshold beyond which $f_2$ pays off, lies in different regions. The columns are increasing in grit from left to right. We can see that if $\theta$ is larger, more grit can actually result in less reward, since the agent switches back to the stable arm and accrues less reward there.}
    \label{fig:reward-visual}
\end{figure}

\subsection{Proof of Lemma~\ref{lemma:min-acc-works}} \label{appendix:proof-min-acc}

\begin{proof}
    For a strategy, let $\{t_i\}_{i = 0}^n$ be the set of times at which the agent switches from one arm to the other with $t_0 = 0$. Then, $[t_i, t_{i+1}]$ for even $i$ are the intervals played on the stable arm, and the remaining intervals are played on the striving arm.
    We argue that we can convert any strategy that is not-minimally accumulating into a minimally-accumulating strategy that accrues at least as much reward as the original non-minimally-accumulating strategy. 

    Suppose the total amount of time spent on the stable arm $f_1$ is $T_1 \coloneqq \sum_{i \text{ even}} (t_{i+1} - t_i)\,.$  and the total time spent on striving arm $f_2$ is $T_2 \coloneqq \sum_{i \text{ odd}} (t_{i+1} - t_i)\,,$ such that $T_1 + T_2 = T\,.$ Recall $\theta\,,$ the threshold until which one must play $f_2$ before it pays off. If $T_2 < \theta\,,$ then redefining the intervals such that the strategy becomes minimally-accumulating does not change the total accumulated reward, since the reward accrued from an arm is only a function of the time {\em on that particular arm}. Thus, the total reward is $T_1$ regardless of the order in which it is accrued.

    Let us now consider the second case, $T_2 > \theta\,.$ Now, let us break up the time spent on arm $f_1$ into the $\frac{1+\gamma}{1-\gamma} \, \theta$ time that must happen prior to playing the striving arm for $\theta$ time and the remaining $T_1 - \frac{1+\gamma}{1-\gamma} \, \theta$ time. (Due to the comfort restriction, we know $T_1 - \frac{1+\gamma}{1-\gamma} \, \theta \ge 0\,.$) As argued previously, we can rearrange interval endpoints for the first $\frac{1+\gamma}{1-\gamma} \, \theta$ time spent on $f_1$ and $\theta$ time spent on $f_2$ with no changes in overall reward. In particular, we can rearrange it to be minimally accumulating. Now, for the remaining $T_1 - \frac{1+\gamma}{1-\gamma} \, \theta$ time, if we play it on the stable arm (at any time), we accrue $R = T_1 - \frac{1+\gamma}{1-\gamma} \, \theta$ reward. However, having satisfied the comfort requirements, if we instead use it at the end of the game and on arm $f_2\,,$ we instead gain reward $(T_2 - \theta - R/2)R\,.$ If $R/2 + T_2 - \theta > 1\,,$ then this is strictly better. If not, we can still play this time on the stable arm. Thus, we have constructed a minimally-accumulating strategy that accrues at least as much reward as the original strategy.

\end{proof}

\subsection{Proof of Lemma~\ref{lemma:gamma-comfort}} \label{appendix:proof-gamma-comfort}

\begin{proof}
    As before, we compute the competitive ratio in two cases: one in which the stable arm played the whole time would be optimal and one in which the striving arm pays off as soon as the agent switches for good. Since the agent is alternating between $\alpha_\gamma \coloneqq \frac{\gamma+1}{2}$ time on $f_1$ and $1-\alpha_\gamma$ time on $f_2\,,$ we must account for this in the computation:

    $$
    \frac{\alpha_\gamma \cdot s - (1-\alpha_\gamma)s + T - s}{T} \, = \, \frac{\alpha_\gamma \cdot s - (1-\alpha_\gamma)s + T - s}{\frac12 (T-s)^2 + \alpha_\gamma \cdot s - (1-\alpha_\gamma)s}\,.
    $$

Solving, we get:
\begin{align}
    s^2 - 2sT + T^2 + s(2\alpha_\gamma-1) - 2 T &= 0 \\    
    s &=  T - \alpha + \frac12 -\frac12 \sqrt{(2\alpha_\gamma - 1)^2 + 4T(3-2\alpha)} \\
     &= T - \frac{\gamma}{2} - \frac{\sqrt{\gamma^2 + 4T(2 - \gamma)}}{2}\,.
\end{align}

We can plug this back in to get the competitive ratio:
$$
\frac{(2\alpha_\gamma - 1)s + T - s}{T} = \gamma + \frac{\gamma(1-\gamma)}{2T} + \frac{(1-\gamma) \sqrt{\gamma^2+ 4T (2-\gamma)}}{2T} \,.
$$

Finally, the amount of time spent exploring on the striving arm is $1-\alpha_\gamma$ fraction of the total time pre-switch, which is:
$$
\frac{1-\gamma}{2} \cdot \left(T - \frac{\gamma}{2} - \frac{\sqrt{\gamma^2 + 4T(2 - \gamma)}}{2} \right)\,.
$$

\end{proof}

\section{Fixed Time Financial Support}
\label{appendix:fixed-time-support}
Now, suppose the unconditionality of support is a little weaker. Instead of ``as much as needed,'' the benefactor only promises $R$ units of support. This could correspond to a parent committing to financially support a child until 18, or 21, or 25, etc. Now, the calculus is a little different. First we consider the case where $R$ is large, say like $T\,.$
\begin{align}
    \text{competitive ratio if never increase } &= \frac{R - s + T - s}{R + T} \\
    \text{competitive ratio if increase right after switch } &= \frac{R - s + T-s}{R - s + \frac{1}{2} (T-s)^2}\,. \\
    \text{equalizing, } \frac{R - s + T - s}{R + T} &= \frac{R - s + T-s}{R - s + \frac{1}{2} (T-s)^2} \\
    s = T + 1 - \sqrt{4T + 1}\,.
\end{align}

Interestingly, in this model the agent switches to the stable arm a little sooner than in the previous case -- which makes sense as the agent knows not to expect unbounded financial support. However, the agent can still explore for that full time owing to the financial support. Thus, when compared to an agent without financial support, this agent can explore for a longer time by a factor of $\frac{T+1 - \sqrt{4T + 1}}{\frac{T - \sqrt{2T}}{2}} \ge \frac{2(T - \sqrt{5T})}{T - \sqrt{2T}}\,,$ which is at least 1.5 for large enough $T$ ($T \ge 23$).

These results allow us to conclude that even when all agents have nearly identical ``outward'' behavior, i.e., the point at which they finally give up on striving and switch to the stable arm, the presence of a trust fund allow agents to explore for longer, i.e., spend more time on the striving arm. Our result provides a quantitative mechanism by which we can explain why agents who seemingly spend similar amounts of time on ambitious goals can see significant differences in outcome when they come from different kinds of resource backgrounds.

\section{Calculations For Interplay of Grit and Trust Fund} \label{appendix:combine}

Now, we look at what happens when the rate of increase of the arm is unknown so some amount of grit plays into the decision of when to switch, but there is also a cost to striving and net reward is never allowed to be negative. Formally, at each time step, the agent chooses between:

$$
f_1(t) = 1 \, \forall \, t \qquad f_2(t) = \begin{cases}
    -1 & t < \theta \\
    \alpha (t-\theta) & t \ge \theta
\end{cases}\,.
$$

As before, an agent has a guess $\alphat$ of how fast they think the increase will happen. Let us first consider someone without a trust fund and guess $\alphat$. For them, the factors to balance are:
\begin{align}
\frac{(1 - 1) \cdot s + T - s}{T} &= \frac{(1 - 1) \cdot s + T - s}{(1 - 1) \cdot s + \frac{\alphat}{2} (T-s)^2} \\
\frac{T - s}{T} &= \frac{T - s}{\frac{\alphat}{2} (T-s)^2} 
\end{align}

which gives switching point $s = T - \sqrt{\frac{2T}{\alphat}}\,$ and $\frac{T}{2} - \sqrt{\frac{T}{2\alphat}}$ time spent on the striving arm. However, an agent with a fixed large safety net (say, $R = T$) can explore for longer, trading off:

\begin{align}
    \frac{R - s + T - s}{R + T} &= \frac{R - s + T-s}{R - s + \frac{\alphat}{2} (T-s)^2} \\
    s &= T + \frac1\alphat - \sqrt{\frac{4T}{\alphat} + \frac 1\alphat}\,.
\end{align}

In this case, the switching point is as above, and the time spent on the striving arm is the same.

\end{document}